\begin{document}

\title{Genuinely Multipartite Concurrence of $N$-qubit X-matrices}
\author{S. M. Hashemi Rafsanjani}
\email{hashemi@pas.rochester.edu}
\affiliation{Rochester Theory Center and the Department of Physics \& Astronomy, University of Rochester, Rochester, New York 14627}

\author{M. Huber}
\affiliation{University of Bristol, Department of Mathematics, Bristol, BS8 1TW, U.K.}

\author{C. J. Broadbent}
\author{ J. H. Eberly}
\affiliation{Rochester Theory Center and the Department of Physics \& Astronomy, University of Rochester, Rochester, New York 14627}


\date{\today}

\begin{abstract}
We find an algebraic formula for the $N$-partite concurrence of $N$ qubits in an X-matrix. X-matrices are density matrices whose only non-zero elements are diagonal or anti-diagonal when written in an orthonormal basis. We use our formula to study the dynamics of the $N$-partite entanglement of $N$ remote qubits in generalized $N$-party Greenberger-Horne-Zeilinger (GHZ) states. We study the case in which each qubit interacts with a local amplitude damping channel. It is shown that only one type of GHZ state loses its entanglement in finite time; for the rest, $N$-partite entanglement dies out asymptotically.~Algebraic formulas for the entanglement dynamics are given in both cases. We directly confirm that the half-life of the entanglement is proportional to the inverse of $N$. When entanglement vanishes in finite time, the time at which entanglement vanishes can decrease or increase with $N$ depending on the initial state. In the macroscopic limit, this time is independent of the initial entanglement. 
\end{abstract}

\pacs{..........}

\maketitle

 \newcommand{\beq}{\begin{equation}}
 \newcommand{\eeq}{\end{equation}}
 \newcommand{\dg}[1]{#1^{\dagger}}
 \newcommand{\reci}[1]{\frac{1}{#1}}
 \newcommand{\ket}[1]{|#1\rangle}
 \newcommand{\nim}{\frac{1}{2}}
 \newcommand{\om}{\omega}
 \newcommand{\te}{\theta}
 \newcommand{\la}{\lambda}
 \newcommand{\beqa}{\begin{eqnarray}}             
 \newcommand{\eeqa}{\end{eqnarray}}               
 \newcommand{\nn}{\nonumber}                      
 \newcommand{\bra}[1]{\langle#1\vert}                 
 \newcommand{\ipr}[2]{\left\langle#1|#2\right\rangle}
  \newcommand{\up}{\uparrow}
   \newcommand{\down}{\downarrow}
     \newcommand{\dn}{\downarrow}         
\newtheorem{lemma}{Lemma}

\section{Introduction}
Even though entanglement was already promoted by \citet{schrodinger1935} as a fundamental aspect of quantum theory and in mathematics it predated quantum mechanics by decades \cite{schmidt}, its value as a resource for a wide range of potential applications was not appreciated until recently \cite{horodecki2009, nielsen:2000}. Furthermore, although its importance is now largely recognized, witnessing and quantifying the entanglement of arbitrary mixed states are still open problems. In only few-party cases do prescriptions exist for determining the entanglement of a mixed state \cite{PhysRevLett.77.1413,*Horodecki19961,wootters, PhysRevLett.85.2625, *PhysRevA.67.012307,*PhysRevLett.97.260502,*PhysRevA.78.012327}.

The problem becomes much more difficult for genuinely multipartite entanglement, entanglement shared between more than two parties. Multipartite entanglement ($N>3$) is thought to play an essential role in many phenomena including quantum metrology \cite{Giovannetti19112004} and quantum phase transitions \cite{PhysRevA.73.010305,*RevModPhys.80.517}. Furthermore, it is of fundamental importance to understand the dynamics of multipartite entanglement when the number of parties sharing entanglement approaches the macroscopic limit, that is, $N\rightarrow \infty$. Previous  studies of the dynamics of multipartite entanglement have utilized measures that fail to capture exactly when multipartite entanglement disappears \cite{PhysRevLett.92.180403,PhysRevLett.93.230501,PhysRevA.65.052327,PhysRevLett.100.080501}. 

An essential step was taken by \citet{PhysRevLett.100.080501}, who utilized the entanglement of different bipartitions of an $N$-qubit system to qualitatively study the scaling laws for the decay of multiqubit entanglement. The caveat to this approach lies in the fact that the entanglement of different bipartitions is a necessary but not sufficient condition for $N$-partite entanglement. The genuinely multipartite entanglement between $N$ parties can vanish before the entanglement of any of the bipartitions vanish. The lack of such analysis is mostly due to the fact that although there have been many attempts to solve the problem of determining the multipartite entanglement of a given state \cite{1367-2630-12-5-053002,PhysRevLett.106.190502,PhysRevA.83.062337} (see also references in the paper by~\citet{PhysRevA.83.062325}), an algebraic and/or numerically efficient prescription has not yet emerged. An algebraic prescription would be especially desirable since it can potentially open the door for a wide range of analytical investigations of entanglement dynamics.

Recently, based on previous works by~\citet*{PhysRevA.67.052107} and \citet{springerlink:10.1007/s11128-007-0052-7}, a new measure of multipartite entanglement, called genuinely multipartite (GM) concurrence, has been proposed, and it has been shown that GM concurrence is an entanglement monotone \cite{PhysRevA.83.062325}. This measure reduces to Wootters's original concurrence \cite{wootters} for two qubits.~Additionally, an algebraic formula for a lower bound of the GM concurrence has been found by \citet{PhysRevA.83.062325}. The lower bound, when calculated for a two-qubit X-matrix, matches the value of Wootters's concurrence.

The X-matrix of \citet{yu-2007} is a density matrix of $N$ qubits, written in an orthonormal product basis, whose non-zero elements are only diagonal or anti-diagonal. The concurrence of a two-qubit X-matrix takes a very simple form \cite{yu-2007} and that is why these two-qubit states have been extensively used in studying the dynamics of entanglement between two qubits in many scenarios \cite{yu-2007,yonac1,PhysRevLett.93.140404, yonac2007,*Cui200744,*Zhang2007274,*PhysRevA.77.054301,*PhysRevA.77.012117,*0953-4075-42-2-025501}. 

%

In view of the fact that the GM concurrence lower bound, derived in Ref.~\cite{PhysRevA.83.062325}, matches the exact value of concurrence for a two-qubit X-matrix, one might wonder if the lower bound might also be exact for more than a two-qubit X-matrix. In this paper we prove that this conjecture is correct. The lower bound provided by \citet{PhysRevA.83.062325} is realized by X-matrices. We thus present an \textit{algebraic} formula for the GM concurrence of an $N$-qubit X-matrix. This is our principal result and it enables analytic formulation of dynamics of $N$-partite entanglement in different scenarios. For illustration, we utilize our formula to directly examine the decay of $N$-qubit entanglement exposed to local decoherence channels.

\section{Genuinely multipartite concurrence of X states}
$N$-partite entanglement is defined by its opposite, biseparability. A pure $N$-partite system $\ket{\psi}\in\mathcal{H}_{1}\otimes\mathcal{H}_{2}\cdots\otimes\mathcal{H}_{N}$ is biseparable if there is a bipartition of the $N$ parties $\mathcal{H}_{1}\otimes\mathcal{H}_{2}\cdots\otimes\mathcal{H}_{N}=\mathcal{H}_{A}\otimes\mathcal{H}_{B}$, where $\mathcal{H}_{A}=\mathcal{H}_{j_{1}}\otimes\mathcal{H}_{j_{2}}\otimes\cdots\otimes \mathcal{H}_{j_{k}}$, $\mathcal{H}_{B}=\mathcal{H}_{j_{k+1}}\otimes\mathcal{H}_{j_{k+2}}\otimes\cdots\otimes \mathcal{H}_{j_{N}}$ for which $\ket{\psi}=\ket{\psi_{A}}\otimes \ket{\psi_{B}}$, and $\ket{\psi_{A}}\in \mathcal{H}_{A}$ and $\ket{\psi_{B}}\in \mathcal{H}_{B}$ \cite{PhysRevLett.104.210501,horodecki2009}. In other words, a pure state is biseparable if it has at least one pure marginal (reduced density matrix). An $N$-partite state that cannot be written as an ensemble of biseparable states is an $N$-partite entangled state.

Before introducing GM concurrence, let us introduce the set of all bipartitions of $N$ parties. Each bipartition is a division of the set $\{1,2,\dots, N\}$ into two non-overlapping and non-empty subsets. The set of all such bipartitions is denoted by $J=\{J_{1}, J_{2}, \dots J_{2^{N-1}-1}\}$. For example for $N=3$, there are three bipartitions $J_{1}=\{1|2,3\},~J_{2}=\{2|1,3\},~\textup{and}~J_{3}=\{3|2,1\}$, so that $J=\{J_{1},J_{2},J_{3}\}$.

For a pure state $\ket{\psi}$, to each of the elements of $J$ we can associate two reduced density matrices $\hat{A}(\ket{\psi},J_{j})$, and $\hat{B}(\ket{\psi},J_{j}) $ by tracing out either of the subsystems associated with that bipartition. The biseparability of a pure state can be determined by whether for any of the elements of $J$, $\hat{A}(\ket{\psi},J_{j})$ and $\hat{B}(\ket{\psi},J_{j})$ are pure. If so, $\ket{\psi}$ is biseparable. Thus the purity of the $j$th bipartition, denoted by $\Pi_{j}(\ket{\psi})$, is a key parameter.

For a pure state GM concurrence is then defined \cite{PhysRevA.83.062325} as, 
\begin{align}\nn
C_{GM}(\ket{\psi}):&=\min_{j}\sqrt{2}\sqrt{1-\Pi_{j}(\ket{\psi})}.
\end{align}
Clearly, $C_{GM}(\ket{\psi})\ge0$ and it is equal to zero if and only if $\ket{\psi}$ is a biseparable state. For a bipartite system this definition reduces to the I-concurrence \cite{PhysRevA.64.042315}.

To determine whether a mixed state $\hat{\rho}$ is biseparable, one has to determine whether $\hat{\rho}$ can be written as a convex sum of pure biseparable states. Thus one has to check all the ways $\hat{\rho}$ can be written as a convex sum of pure states (all pure state decompositions). Let us distinguish different pure state decompositions of $\hat{\rho}$ by assigning a continuous superscript, $\alpha$, to label them: 
\begin{align}
\hat{\rho}=\sum_{i} p_{i}^{(\alpha)}  \ket{\psi_{i}^{\alpha}}\bra{\psi_{i}^{\alpha}}.
\end{align}
To determine whether a particular pure state decomposition is a sum of biseparable states, we can calculate the average pure state GM concurrence for that particular $\alpha$
\begin{align}\nn
C_{\alpha}(\hat{\rho})&=\sum_{i} p_{i}^{(\alpha)} C_{GM}( \ket{\psi_{i}^{\alpha}})\\
&=\sum_{i} p_{i}^{(\alpha)} \biggl\{\min_{j}\sqrt{2}\sqrt{1-\Pi_{j}(\ket{\psi_{i}^{\alpha}})}\biggr\}.
\end{align}

Now we are ready to extend the definition of GM concurrence to all mixed states:
\begin{align}
C_{GM}(\hat{\rho})=\min_{\alpha} ~C_{\alpha}(\hat{\rho}).
\end{align}
If $C_{GM}(\hat{\rho})=0$ this means that there is an $\alpha$ for which $C_{\alpha}(\hat{\rho})=0$. Then $\hat{\rho}$ can be written as a sum of pure biseparable states, so $\hat{\rho}$ is biseparable. If $C_{GM}(\hat{\rho})>0$, there is no $\alpha$ for which the $\ket{\psi_{i}^{\alpha}}$'s are all biseparable and thus $\hat{\rho}$ is an $N$-partite entangled state.
The GM concurrence of $N$ parties, as defined in Ref.~\cite{PhysRevA.83.062325}, is a monotone of genuinely multipartite entanglement; it distinguishes between biseparable and $N$-partite entangled states and is convex, invariant under local unitary transformations, and non-increasing under local operations and classical communications (LOCC) \cite{PhysRevA.83.062325}.
The operational meaning of GM concurrence in terms of mutual information is explicitly discussed in Ref. \cite{PhysRevA.86.022319}. 

If the orthonormal basis for the X-matrix is $\{\ket{0,0,\dots,0},\ket{0,0,\dots,1},\dots,\ket{1,1,\dots,1}\}$, then one can always write an X-matrix in the form given below
\begin{align}
\hat{X}=\left(  \begin{array}{cccccccc}
    a_{1} & ¥ & ¥ & ¥ & ¥ & ¥ & ¥ & z_{1} \\ 
    ¥ & a_{2} & ¥  & ¥ & ¥ & ¥ & z_{2} & ¥ \\ 
    ¥ & ¥ & \ddots & ¥ & ¥ & \iddots & ¥ & ¥ \\ 
    ¥ & ¥ & ¥ & a_{n} & z_{n} & ¥ & ¥ & ¥ \\ 
    ¥ & ¥ & ¥ & z_{n}^{*} & b_{n} & ¥ & ¥ & ¥ \\ 
    ¥ & ¥ & \iddots & ¥ & ¥ & \ddots & ¥ & ¥ \\ 
    ¥ & z_{2}^{*} & ¥ & ¥ & ¥ & ¥ & b_{2} & ¥ \\ 
    z_{1}^{*} & ¥ & ¥ & ¥ & ¥ & ¥ & ¥ & b_{1} \\ 
  \end{array}
\right),
\end{align}
where $n=2^{N-1}$, and we require $|z_{i}|\le \sqrt{a_{i}b_{i}}$ and $\sum_{i}(a_{i}+b_{i})=1$ to ensure that $\hat{X}$ is positive and normalized. One can see why density matrices in this class are called X-matrices. It can be shown that the GM concurrence of an $N$-qubit X-matrix is given by
\begin{align}
C_{GM}= 2\max\{0,|z_{i}|-w_{i}\}, ~ i=0,1,\dots,n
\end{align}
where $w_{i}=\sum_{j\neq i}^{n}\sqrt{a_{j} b_{j}}$. A detailed proof of this result is given in the appendix.

\section{Robustness of N-partite entanglement}
Restricted forms of X-matrices of more than two qubits have been used in some recent studies of the dynamics of multipartite entanglement.~The entanglement measures utilized in these studies yield qualitative information about the multipartite entanglement \cite{0953-4075-41-15-155501,PhysRevA.82.032326,*PhysRevA.79.032322,*PhysRevA.79.012318}. Direct study of the dynamics of genuinely multipartite entanglement has been an elusive problem mainly due to the lack of an analytical measure of genuinely multipartite entanglement that is simple to calculate. Only for Greenberger-Horne-Zeilinger (GHZ) states that undergo pure dephasing, has there been a successful attempt that uses a geometric measure to give the exact dynamics of $N$-partite entanglement \cite{PhysRevA.78.060301}. Our GM concurrence formula provides an opening to quantitatively examine the conjectures of such studies and many other scenarios whenever the initial density matrix is an X-matrix and the X nature of the density matrix is robust in the dynamics. 

In the following we use our formula to directly study the dynamics  of the multiqubit entanglement shared by  $N$ qubits when each qubit is subjected to a local amplitude damping channel. This can represent, for example, the spontaneous decay of $N$ two-level atoms, each in a separate zero-temperature Markovian reservoir. For a two-level atom in zero-temperature Markovian reservoir, the evolution of ground and excited states, $\ket{g,0}$ and $\ket{e,0}$, is given by
\begin{align} \nn
&U(t)\ket{e,0}=A_{t}\ket{e,0}+B_{t}\ket{g,\bold{1}},\\
&U(t)\ket{g,0}=\ket{g,0},
\end{align}
where $U(t)$ is the local propagator,  $A_{t}=\sqrt{1-P_{t}}$, and $B_{t}=\sqrt{P_{t}}$. Although $P_{t}$ has a time dependence $P_{t}=1-e^{-\gamma t}$ where $\gamma$ is the damping rate, we can also think of $P_{t}$ as the probability of the decay and write everything as a function of $P$ instead of an explicit dependence on time. Thus, in the following we drop the explicit time dependence of $P$. The state $\ket{\bold{1}}$ denotes an excited state of a local reservoir.

We study the dynamics of the multiqubit entanglement that is shared initially by $N$-partite GHZ states:
\begin{align}\nn 
\ket{\Phi_{N}^{(k)},\alpha}=\cos\alpha \ket{e^{\otimes N-k}g^{\otimes k}}+\sin\alpha\ket{g^{\otimes N-k}e^{\otimes k}}
\end{align}
This is a GHZ state where either $(N-k)$'s of the qubits are initially excited and the rest are in their ground state or $k$ qubits are initially excited and the rest are in their ground states. We first study the $k=0$ case. We present a detailed analysis only for the three qubit case but the generalization to $N$ qubits is straightforward.
By tracing out the reservoirs we find the density matrix of the three atoms,
\begin{align}
\hat{\rho}_{\Phi_{3}^{(0)}}(t)=  \left(  \begin{array}{cccccccc}
     a_{1} & ¥ & ¥ & ¥ & ¥ & ¥ & ¥ &  z_{1} \\ 
    ¥ &  a_{2} & ¥ & ¥ & ¥ & ¥ & ¥ & ¥ \\ 
    ¥ & ¥ &  a_{2} & ¥ & ¥ & ¥ & ¥ & ¥ \\ 
    ¥ & ¥ & ¥ & b_{2} & ¥ & ¥ & ¥ & ¥ \\ 
    ¥ & ¥ & ¥ & ¥ &  a_{2} & ¥ & ¥ & ¥ \\ 
    ¥ & ¥ & ¥ & ¥ & ¥ & b_{2} & ¥ & ¥ \\ 
   ¥ & ¥ & ¥ & ¥ & ¥ & ¥ & b_{2} & ¥ \\ 
     z_{1} & ¥ & ¥ & ¥ & ¥ & ¥ & ¥ &  b_{1} \\ 
  \end{array}\right),
  \end{align}
  where
  \begin{align} \nn
 & a_{1}=\cos^{2}\alpha |A_{t}|^{6}, &b_{1}=\sin^{2}\alpha +\cos^{2} \alpha |B_{t}|^{6},\\ \nn
 & a_{2}=\cos^{2}\alpha |A_{t}^{2}B_{t}|^{2},& b_{2}=\cos^{2}\alpha |A_{t} B_{t}^{2}|^{2},\\ \nn
 &z_{1}= \sin\alpha\cos\alpha A_{t}^{3}.
\end{align}
 For an initial  $\ket{\Phi_{N}^{(0)},\alpha}$ state the concurrence reads
\begin{align} \label{conformula}
&C_{N}^{(0)}=\max\{0,Q_{N}^{(0)}\}, \\ \nn
&Q_{N}^{(0)}=2(\cos^{2}\alpha)~(1-P)^{\frac{N}{2}}\left(|\tan\alpha|-(2^{N-1}-1)P^{\frac{N}{2}}\right).
\end{align}
In Fig.~\ref{figure4}, we plot $Q_{N}^{(0)}$ versus $P$ for $N=\{2,10,100\}$ qubits. It confirms that the bulk of the initial entanglement dies out faster (at smaller $P$'s) as the number of qubits increases. In the $Q_{N}^{(0)}$ formula, the non-negative factor, $2(\cos^{2}\alpha)~(1-P)^{\frac{N}{2}}$, determines the decay of entanglement for $N\gg 2$, and one can show that for the amplitude damping channel the half-life of the entanglement depends on $N$ as
\begin{align}
P_{\textup{half-life}}\approx \frac{2\log 2}{N},
\end{align}
which is the same as the half-life of the coherence elements in the density matrix. We observe from Fig.~\ref{figure4} that the half-life of the entanglement decreases as the number of the qubits increases. One might expect a similar dependence on $N$ for the time at which the entanglement disappears completely. However, this is not always the case. In order to show this we solve the equation $Q_{N}^{(0)}= 0$ for the critical value of $P$, beyond which the concurrence is zero:
\begin{align}
P\ge\big(\frac{|\tan\alpha|}{2^{N-1}-1}\big)^{\frac{2}{N}}=P_{c}.
\end{align}

If $P_{c}<1$, then the entanglement has a finite life \cite{PhysRevLett.93.140404}. Otherwise the entanglement dies out asymptotically.
\begin{figure}[t]
\includegraphics[width=\columnwidth]{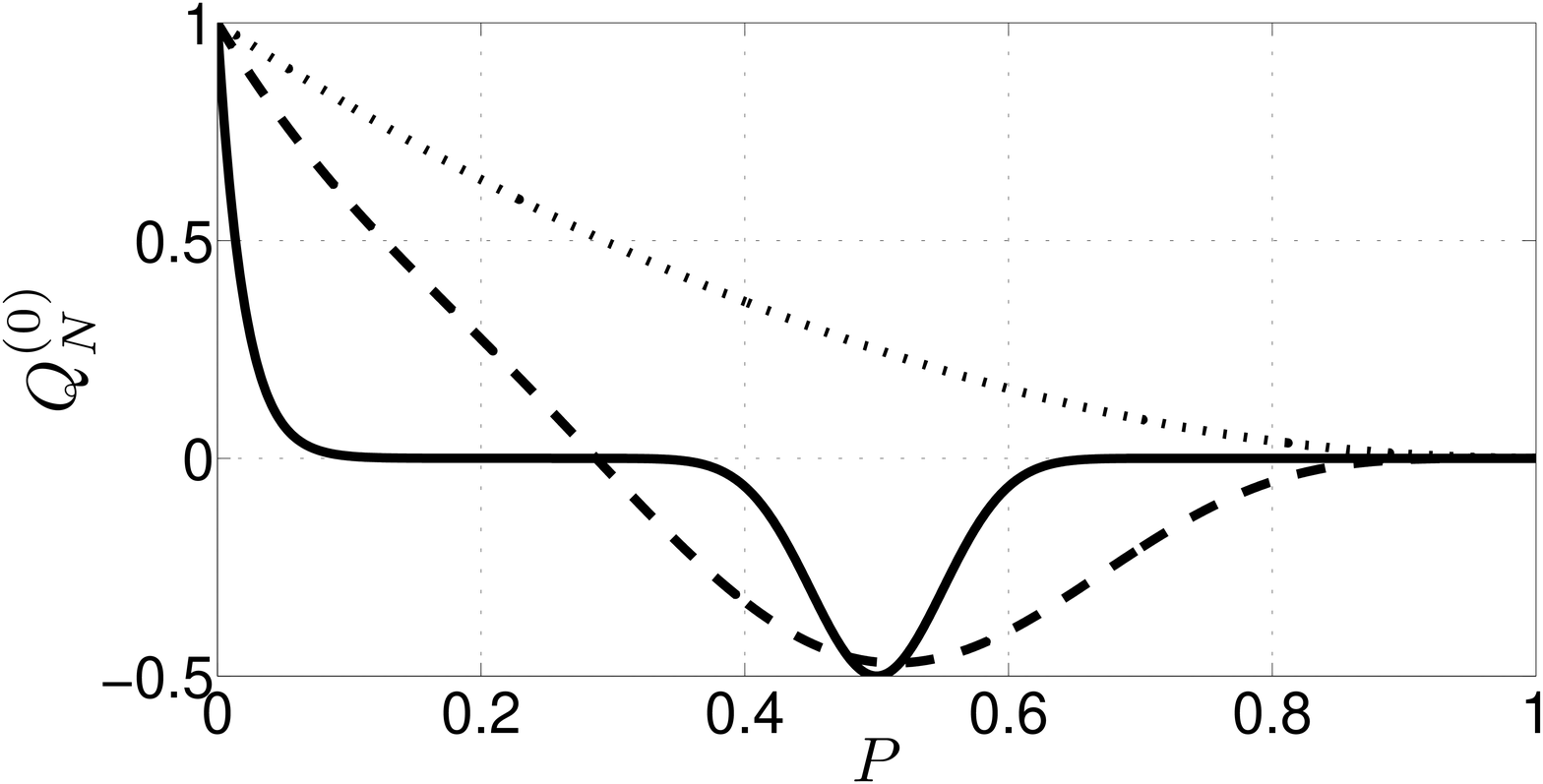}
\caption{The $Q_{N}^{(0)}$ vs $P$ for different numbers of qubits for the initial state $\ket{\Phi_{N}^{(0)},\frac{\pi}{4}}$. $N=2$ is the dotted line, $N=10$ is the dashed line and $N=100$ is the solid line.}
\label{figure4}
\end{figure}
In Fig.~\ref{figure5}, we plot $P_{c}$ versus the number of the qubits for different initial states. We observe that the critical value, $P_{c}$, can increase, decrease or even decrease and then increase as a function of $N$. The parameter that determines this peculiar dependence of $P_{c}$ on $N$ is $\tan\alpha$, which one can think of as a distance of the initial state to the final state. In the macroscopic limit, $N\rightarrow\infty$, even this dependence on $\tan\alpha$ is suppressed. Thus although the half-life of macroscopic entanglement  is very small, a non-zero entanglement lasts for a constant interval of time before vanishing completely.
\begin{figure}[b]
\includegraphics[width=\columnwidth]{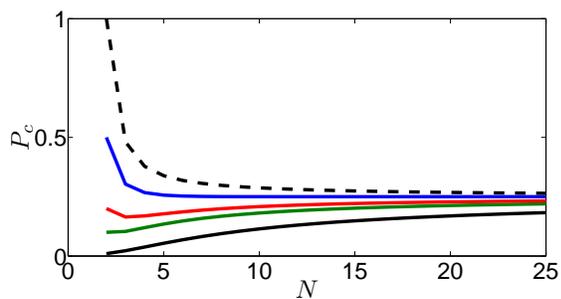}
\caption{$P_{c}$ vs number of qubits for different initial states $\ket{\Phi_{N}^{(0)},\alpha}$. From bottom up $\tan\alpha=0.01,0.1,0.2,0.5,$ and $1$ respectively.}
\label{figure5}
\end{figure}

Similar unusual behaviors were observed by \citet{PhysRevLett.100.080501} for the entanglement of different bipartitions of $N$ qubits. They had derived similar $N$-dependence for the half-life and also provided examples of initial states giving $P_{c}$ increasing with $N$. It should be pointed out that since the entanglement of different bipartitions is not a sufficient condition for N-partite entanglement similar behavior was not a foregone conclusion. 

For the $k>0$ case one can show that the initial entanglement only decays asymptotically. Below we present the argument for $N=3$ and $k=1$ but the extension to higher $N$ is straightforward. The asymptotic decay of entanglement is due to the fact that $\bra{e,e,e}\rho_{\Phi_{3}^{(1)}}\ket{e,e,e}$, $\bra{e,g,e}\rho_{\Phi_{3}^{(1)}}\ket{e,g,e}$, and $\bra{g,e,e}\rho_{\Phi_{3}^{(1)}}\ket{g,e,e}$ remain zero for all times. To show this, we note that our initial state is a superposition of two possibilities. Either two of the atoms, $\{1,2\}$, are excited and the other atom, $\{3\}$, is in vacuum state, or that single atom is excited and the atoms $\{1,2\}$ are in their ground states. Since all reservoirs are initially in their ground states, if an atom is in its ground state initially it will always remain there. The three diagonal terms that we referred to require the atom $\{3\}$ and at least one of the other two atoms to be excited simultaneously. Since this is forbidden, all of these matrix elements remain zero. Thus the negative contribution to the concurrence formula remain zero. This argument can be generalized for $N\ge 4$ to all of the GHZ states except for $\ket{\Phi_{N}^{(0)},\alpha}$, because in the $\ket{\Phi_{N}^{(0)},\alpha}$ state all of the atoms can be initially excited. The concurrence of all $\ket{\Phi_{N}^{(k)},\alpha}$ initial states with $k>0$ is given by
\begin{align}\nn
&C_{N}^{(k)}=|\sin2\alpha| (1-P)^{\frac{N}{2}}.
\end{align}
Thus for $k>0$, concurrence only dies when $P=1$. 
\section{conclusion}
N-partite entanglement, either as a resource for quantum computation or as a fundamental property of quantum theory, has been difficult to quantify, especially for mixed states. This is an important drawback since many of the algorithms  in quantum computation need N-partite entanglement between a large number of qubits, and inevitable interaction of these qubits with the environment renders initial pure states mixed and diminishes their entanglement. Thus it is of interest to understand, quantitatively, the dynamics of N-partite entanglement when the qubits sharing it come in contact with different environments. Here, we have found an algebraic formula for the genuinely multipartite (GM) concurrence of N-qubit density matrices that can be written as X-matrices in an orthonormal product basis. This development allows $N$-partite entanglement to be quantified for such states. The formula opens up the possibility of studying entanglement dynamics of $N$-qubit states in different scenarios, as long as the X-form of the density matrix is preserved. 

Using the concurrence formula, we have studied the dynamics of $N$-partite entanglement of $N$ two-level atoms interacting with local amplitude damping channels. We showed that only for $k=0$ the $\ket{\Phi_{N}^{(k)},\alpha}$ initial states lose their N-partite entanglement in finite time. Algebraic formulas for the concurrence were presented. It is observed that for large $N$ the bulk of initial concurrence decays with a rate inverse to $N$. For a given $N$ and $k=0$, the time at which entanglement vanishes to zero is determined by the distance of the initial state from the final state. In the macroscopic limit this time is independent of $\alpha $ too. An open question is whether this time interval also appears for other kinds of initial states in the macroscopic limit and whether it has any observable effect.

\section{acknowledgments}
S.M.H.R. acknowledges a useful communication with O. G\"uhne. We acknowledge partial financial support from ARO W911NF-09-1-0385 and NSF PHY-0855701. M.H. gratefully acknowledges support from the EC-project IP ``Q-Essence'', the ERC advanced grant ``IRQUAT'' and MC grant ``Quacocos''.

\section{Appendix: Concurrence of $N$-qubit X-states}
In Ref. \cite{PhysRevA.83.062325}, Ma \textit{et al.}~presented a lower bound for the GM concurrence. The lower bound of GM concurrence, derived in  Ref. \cite{PhysRevA.83.062325}, for an X-matrix is given by
\begin{align}
C_{GM}\ge 2\max\{0,|z_{i}|-w_{i}\}, ~ i=0,1,\dots,n
\label{lowerbound}
\end{align}
where $w_{i}=\sum_{j\neq i}^{n}\sqrt{a_{j} b_{j}}$. In the following we will show that this lower bound is exact for all X-matrices. Without loss of generality we can assume that $\sqrt{a_{1}b_{1}} \ge \sqrt{a_{i}b_{i}}$. Since we have assumed that $\sqrt{a_{1}b_{1}} \ge \sqrt{a_{i}b_{i}}$, it is easy to show that $|z_{i}|-w_{i}\le 0$ for $i>1$, so that Eq.~\ref{lowerbound} reduces to 
\begin{align}
C_{GM}(X)\ge 2\max\{0,|z_{1}|-w_{1}\}
\end{align}
We will show that this bound is actually an equality. First let us prove a lemma that we will utilize in our proof.

\begin{lemma}
\label{lemma}
The GM concurrence of an X-matrix for which $a_{1}b_{1}\ge a_{i}b_{i}$,~and $a_{j}=b_{j}=0$ for all $j\neq \{i,1\}$, is
\begin{align}
C_{GM}(\hat{X}_{1i})=2 \max\{0,|z_{1}|-\sqrt{a_{i}b_{i}}\}
\end{align}
\end{lemma}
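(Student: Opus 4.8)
The plan is to squeeze $C_{GM}(\hat X_{1i})$ between two matching bounds. The lower bound $C_{GM}(\hat X_{1i})\ge 2\max\{0,|z_{1}|-\sqrt{a_{i}b_{i}}\}$ is already in hand: for this matrix $w_{1}=\sum_{j\neq 1}\sqrt{a_{j}b_{j}}=\sqrt{a_{i}b_{i}}$, so it is simply the reduced form of Eq.~\ref{lowerbound} recorded above. Everything therefore comes down to the reverse inequality, for which I would restrict attention to a single, well-chosen bipartition.

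First I would record the elementary fact that $C_{GM}(\hat\rho)\le C_{A|B}(\hat\rho)$ for \emph{any} bipartition $(A|B)$, where $C_{A|B}$ denotes the convex-roof I-concurrence across that cut. This follows straight from the definitions: for a pure state $C_{GM}(\ket{\psi})=\min_{j}\sqrt{2}\sqrt{1-\Pi_{j}(\ket{\psi})}\le \sqrt{2}\sqrt{1-\Pi_{A|B}(\ket{\psi})}=C_{A|B}(\ket{\psi})$, and passing to the convex roof on both sides preserves the inequality. Hence it suffices to exhibit one bipartition across which $\hat X_{1i}$ has I-concurrence exactly $2\max\{0,|z_{1}|-\sqrt{a_{i}b_{i}}\}$.

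The right bipartition comes from the X-structure itself. The weights $a_{1}$ and $b_{1}$ sit on $\ket{0\cdots0}$ and $\ket{1\cdots1}$, while $a_{i},b_{i}$ sit on a bit-complementary pair $\ket{s},\ket{\bar s}$ of computational basis states, with $s\notin\{0\cdots0,\,1\cdots1\}$ (here the hypotheses $i\neq1$ and $i\le n=2^{N-1}$ enter). Let $S\subset\{1,\dots,N\}$ be the set of qubit positions at which $s$ equals $1$; then $S$ is a nonempty proper subset, so $(A|B):=(S\,|\,S^{c})$ is a legitimate bipartition. Grouping the qubits accordingly, the four relevant basis vectors restrict to $\ket{0\cdots0}\mapsto\ket{0_{A}0_{B}}$, $\ket{1\cdots1}\mapsto\ket{1_{A}1_{B}}$, $\ket{s}\mapsto\ket{1_{A}0_{B}}$, $\ket{\bar s}\mapsto\ket{0_{A}1_{B}}$, where $\ket{0_{A}},\ket{1_{A}}$ (and likewise $\ket{0_{B}},\ket{1_{B}}$) are orthonormal. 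Thus, viewed across this cut, $\hat X_{1i}$ is supported on a $\mathbb{C}^{2}\otimes\mathbb{C}^{2}$ subspace, on which it is precisely a two-qubit X-matrix with diagonal $(a_{1},b_{i},a_{i},b_{1})$ and corner coherence $z_{1}$.

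The final step is routine: a bipartite state supported on a two-qubit subspace has convex-roof I-concurrence equal to Wootters's two-qubit concurrence of the associated $4\times4$ matrix (the optimal decomposition can be taken inside the range, hence inside the subspace), and for the X-matrix just obtained the Wootters/Yu--Eberly formula evaluates, using $a_{1}b_{1}\ge a_{i}b_{i}$, to $2\max\{0,|z_{1}|-\sqrt{a_{i}b_{i}}\}$. Combined with the lower bound this yields the Lemma. The only genuine obstacle I anticipate is the structural bookkeeping of the previous paragraph — checking that $(S|S^{c})$ is always a valid (nonempty, proper) bipartition and that the four basis states collapse exactly onto a two-qubit product basis — rather than any hard estimate; once that combinatorial point is settled, the proof is a short chain of known facts.
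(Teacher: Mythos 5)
Your proposal is correct and follows essentially the same route as the paper's own proof: establish the lower bound from Eq.~\ref{lowerbound}, then obtain a matching upper bound by singling out the bipartition determined by the support of $a_{i}$ (your $S|S^{c}$ is the paper's $F|G$ cut), collapsing $\hat{X}_{1i}$ onto a two-qubit X-matrix across that cut, and invoking Wootters's concurrence together with $a_{1}b_{1}\ge a_{i}b_{i}$. The only difference is presentational — you package the key step as the general inequality $C_{GM}\le C_{A|B}$ plus the fact that the convex-roof concurrence of a state supported on a two-qubit subspace is Wootters's concurrence, whereas the paper unpacks the same argument explicitly by lifting the optimal pure-state decomposition of the $4\times4$ matrix back to the $N$-qubit state.
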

\begin{proof}We already know that this quantity is a lower bound of GM concurrence. Thus we only need to show that it is also an upper bound. We do this by mapping $\hat{X_{1i}}$ to a two-qubit density matrix, $\hat{R}$, and then show that $C_{GM}(\hat{X}_{1i})$ is bounded from above by Wootters's concurrence of $\hat{R}$, where $C(\hat{R})=2\max\{0,|z_{1}|-\sqrt{a_{i}b_{i}}\}$.

Before going forward let us introduce some notation. Since we are working with qubits, we can represent each vector (ket) of the above basis as a number from 0 to $2^{N}-1$ written in the binary basis. For example, $\ket{0,0,\cdots,0}=\ket{0}$, $\ket{0,\cdots,0,1}=\ket{1}$, $\ket{0,\cdots,1,0}=\ket{2}$, and so on $\ket{1,\cdots,1,1}=\ket{2^{N}-1}$. We also denote the bit-flipped states in the same way, $\ket{\bar{i}}=\ket{2^{N}-i-1}$, for example, $\ket{\bar{0}}=\ket{2^{N}-1}$. In places where we need to label the individual qubits, we do so by using a subscript on the bits.

We perform the mapping by focusing on a specific bipartition of the qubits. The four non-zero diagonal elements of $\hat{X}_{1i}$ are $\{a_{1},a_{i},b_{1},b_{i}\}$, corresponding to projectors $\{\ket{0}\bra{0},\ket{i-1}\bra{i-1},\ket{\bar{0}}\bra{\bar{0}},~\textup{and}~\ket{\overline{i-1}}\bra{\overline{i-1}}\}$ respectively. Those qubits that contribute $1$ to the ket $\ket{i-1}$ we designate as party $F$. The rest of the qubits we denote as party $G$. For example, with seven qubits, which we denote as $(q_{1},q_{2},q_{3},q_{4},q_{5},q_{6},q_{7})$, where $i=6$, the basis states are
\begin{align} \nn
\ket{0}= &\ket{0_{1},0_{2},0_{3},0_{4},0_{5},0_{6},0_{7}},\\ \nn
\ket{5}= &\ket{0_{1},0_{2},0_{3},0_{4},1_{5},0_{6},1_{7}},\\ \nn
\ket{127}= &\ket{1_{1},1_{2},1_{3},1_{4},1_{5},1_{6},1_{7}},\\ 
\ket{122}= &\ket{1_{1},1_{2},1_{3},1_{4},0_{5},1_{6},0_{7}}.
\end{align}
Then party $F$ is given by qubits $(q_{1},q_{2},q_{3},q_{4},q_{6})$, and the remaining two qubits, $(q_{5},q_{7})$, make party $G$. Under this bipartition, we can write $\hat{X}_{1i}$ using the following basis states,
\begin{align}\nn
&\ket{\dn_{F}}=\ket{0_{1},0_{2},0_{3},0_{4},0_{6}},&~~&\ket{\up_{F}}=\ket{1_{1},1_{2},1_{3},1_{4},1_{6}},\\  \nn
&\ket{\dn_{G}}=\ket{0_{5},0_{7}},&~~~&\ket{\up_{G}}=\ket{1_{5},1_{7}},
\end{align}
\begin{align} \nn
\hat{X}_{1i}&=a_{1}\ket{\dn_{F}\dn_{G}}\bra{\dn_{F}\dn_{G}}+b_{1}\ket{\up_{F}\up_{G}}\bra{\up_{F}\up_{G}} \\ & + \nn
a_{i}\ket{\dn_{F}\up_{G}}\bra{\dn_{F}\up_{G}}+b_{i}\ket{\up_{F}\dn_{G}}\bra{\up_{F}\dn_{G}}\\ \nn
& +z_{1}\ket{\dn_{F}\dn_{G}}\bra{\up_{F}\up_{G}}+z_{1}^{*}\ket{\up_{F}\up_{G}}\bra{\dn_{F}\dn_{G}} \\ & +
z_{i}\ket{\dn_{F}\up_{G}}\bra{\up_{F}\dn_{G}}+z_{i}^{*}\ket{\up_{F}\dn_{G}}\bra{\dn_{F}\up_{G}},
\end{align}
We see that if we restrict attention to the subspace defined by the non-zero elements of $\hat{X}_{1i}$ we can map $\hat{X}_{1i}$ to a two qubit density matrix, $\hat{R}$, which, in the basis $\{\ket{\dn_{F}\dn_{G}},\ket{\dn_{F}\up_{G}},\ket{\up_{F}\dn_{G}},\ket{\up_{F}\up_{G}}\}$, reads
\begin{align}
\hat{X}_{1i}\longrightarrow \hat{R}=\left(  \begin{array}{cccc}
    a_{1} & ¥ & ¥ & z_{1} \\ 
    ¥ & a_{i} & z_{i} & ¥ \\ 
    ¥ & z_{i}^{*} & b_{i} & ¥ \\ 
    z_{1}^{*} & ¥ & ¥ & b_{1} \\ 
  \end{array}\right).
  \label{map}
\end{align}
Now that we have a two-qubit density matrix, we can take advantage of Wootters's concurrence. Note that from each pure-state decomposition (PSD) of $\hat{R}$ one can make a PSD of $\hat{X}_{1i}$ by mapping the basis states of the two-qubit system back to the multi-qubit basis states. We pick the PSD whose average concurrence is the minimum among all possible PSD's of  $\hat{R}$. Thus
\begin{align}
 \hat{R}=\sum_{i} p_{i} \ket{\psi_{i}}\bra{\psi_{i}},~~~C(\hat{R})=\sum_{i} p_{i} C(\ket{\psi_{i}}).
 \label{woottersconcurrence}
\end{align}
In Eq.~\ref{woottersconcurrence}, $C(\hat{R})$ is Wootters's concurrence, which, by definition, is equal to the minimum average concurrence over all possible PSD's of $\hat{R}$.
As mentioned before, each pure state $\ket{\psi_{i}}$ can be mapped back to an N-qubit state $(\ket{\psi_{i}} \rightarrow  \ket{\Psi_{i}} )$, producing a PSD for $\hat{X}_{1i}$, 
\begin{align}
 \hat{X}_{1i}=\sum_{i} p_{i} \ket{\Psi_{i}}\bra{\Psi_{i}}.
\end{align}
Since GM concurrence is convex by definition, we have
\begin{align}
 C_{GM}(\hat{X}_{1i})\le\sum_{i} p_{i} C_{GM}(\ket{\Psi_{i}}).
\end{align}
 For a pure state the GM concurrence is defined by 
 \begin{align}
C_{GM}(\ket{\Psi_{i}})=\min_{j} \sqrt{2}\sqrt{1-\Pi_{j}(\ket{\Psi_{i}})},
\end{align}
where the minimum is taken over all bipartitions, $J$, of the $N$ qubits. Therefore, the GM concurrence must be bounded by any specific bipartition, including the bipartition of the $N$ qubits to party $F$ and party $G$.
\begin{align} 
C_{GM}(\ket{\Psi_{i}})\le \sqrt{2}\sqrt{1- \Pi_{F|G} (\ket{\Psi_{i}})}.
\label{backmapping}
\end{align}
 Using the same mapping as for Eq.~\ref{map} it is easy to show that $C(\ket{\psi_{i}})$ is equal to the right hand side of Eq.~\ref{backmapping}. Therefore we conclude that 
\begin{align} \nn
 C_{GM}(\hat{X}_{1i})&\le\sum_{i} p_{i} C_{GM}(\ket{\Psi_{i}})\le \sum_{i} p_{i} C(\ket{\psi_{i}})\\
 &=C(\hat{R})=2\max\{0,|z_{1}|-\sqrt{a_{i}b_{i}}\},
\end{align}
where the right most equality is found by evaluating Wootters's concurrence for $\hat{R}$ under the assumption $a_{1}b_{1}\ge a_{i} b_{i}$. This upper bound matches the lower bound and therefore it is the exact value of $C_{GM}(\hat{X}_{1i})$.
\end{proof}

Next, we generalize this result to all the X-matrices. We do so by decomposing the X-matrix into a convex sum of $\hat{X}_{1i}$ matrices. Let us first look at the case for which $|z_{1}|- w_{1}\ge 0$.

(a) $|z_{1}|- w_{1}\ge 0$. Note that $|z_{i}|-w_{i}\le0$ for $i\ge 2$ since $\sqrt{a_{1}b_{1}}\ge\sqrt{a_{i}b_{i}}\ge|z_{i}|$. First by a change of phase of the basis, which is a local unitary transformation, we change $z_{1}$ to $|z_{1}|$. This only changes the phase of the other off-diagonal elements. Then we decompose $\hat{X}$ in the following form.
\begin{align}
&\hat{X}=\hat{A}+\sum_{i>1}^{n}\hat{S}_{i},
\end{align}
where $\hat{S}_{i}$ is an $\hat{X}_{1i}$ matrix whose two-qubit counterpart reads 
\begin{align}
\hat{R}_{i}=\left(  \begin{array}{cccc}
    x_{i} & ¥ & ¥ & \sqrt{a_{i}b_{i}} \\ 
    ¥ & a_{i} & z_{i} & ¥ \\ 
    ¥ & z_{i}^{*} & b_{i} & ¥ \\ 
    \sqrt{a_{i}b_{i}} & ¥ & ¥ & y_{i} \\ 
  \end{array}\right),
\end{align}
where
\begin{align}\nn
A_{11}&=a_{1}(1-\frac{w_{1}}{\sqrt{a_{1}b_{1}}}),~~~
A_{2n,2n}=b_{1}(1-\frac{w_{1}}{\sqrt{a_{1}b_{1}}}),\\ \nn
A_{1,2n}&=A_{2n,1}=|z_{1}|-w_{1},\\ \nn
A_{i,j}&=0~~i\neq \{1,2n\}, \text{or}~~ j\neq \{1,2n\},\\ 
x_{i}&=\frac{a_{1}\sqrt{a_{i}b_{i}}}{w_{1}},~~~~~~~~~
y_{i}=\frac{b_{1}\sqrt{a_{i}b_{i}}}{w_{1}}.
\end{align}
It can be shown that $\hat{S}_{i}$'s are all proportional to valid density matrices, since  they are non-negative hermitian matrices. The proportionality constant is between zero and one and can be interpreted as probability. Using Lemma \ref{lemma} one can show that all $\hat{S}_{i}$'s are biseparable matrices (though not normalized). Regarding the first matrix in the decomposition, the proportionality constant is $A_{11}+A_{2n,2n}$, and its GM concurrence is $2(|z_{1}|-w_{1})/(A_{11}+A_{2n,2n})$. Due to the convexity of GM concurrence we conclude that 
\begin{align}(A_{11}+A_{2n,2n})\frac{2(|z_{1}|-w_{1})}{A_{11}+A_{2n,2n}}=2(|z_{1}|-w_{1})\end{align}
 is an upper bound for the GM concurrence of $\hat{X}$. Since 2($|z_{1}|-w_{1}$) is also a lower bound for the concurrence, it is the exact value of the GM concurrence.
 
Note that for the above decomposition to work we had to assume that $|z_{1}|\ge w_{1}$. We now turn to the case $|z_{1}|< w_{1}$. We seek to show that all such density matrices are biseparable. We consider two different scenarios. 

(b) $\sqrt{a_{1} b_{1}}\ge w_{1}$. In this case the matrix $\hat{X}$ can be decomposed to matrices similar to the previous case. 
\begin{align}
\hat{X}=\sum_{i>1}^{n}\hat{S}_{i}^{\prime},
\end{align}
where $\hat{S}_{i}^{\prime}$ is an $\hat{X_{1i}}$ matrix whose two-qubit counterpart reads
\begin{align}
\hat{R}_{i}^{\prime}=\left(  \begin{array}{cccc}
    a_{1}T_{i} & ¥ & ¥ & z_{1}T_{i} \\ 
    ¥ & a_{i} & z_{i} & ¥ \\ 
    ¥ & z_{i}^{*} & b_{i} & ¥ \\ 
    z_{1}^{*}T_{i} & ¥ & ¥ &b_{1} T_{i} \\ 
  \end{array}\right), ~~T_{i}=\frac{\sqrt{a_{i}b_{i}}}{w_{1}}.
\end{align}
Since $|z_{1}|T_{i}\le \sqrt{a_{i}b_{i}}$~and~$|z_{i}|\le \sqrt{a_{1}b_{1}}T_{i},$ we can invoke Lemma 1 to confirm $\hat{R}_{i}^{\prime}$ is biseparable for all $i$. The fact that $\hat{R}_{i}^{\prime}$ is not normalized does not interfere with the proof of biseparability as one can always factor out $\textup{Tr}[\hat{R}_{i}^{\prime}]$. Now we focus on the last case.

(c) $\sqrt{a_{1} b_{1}}< w_{1}$. In this case we divide our matrix into two positive semi-definite matrices $\hat{X}=\hat{K}_{1}(t,r)+\hat{K}_{1}(r,t)$.
\begin{align}\nn
\hat{K_{1}}(t,r)=\left(  \begin{array}{cccccccc}
    a_{1}t & ¥ & ¥ & ¥ & ¥ & ¥ & ¥ & z_{1}t \\ 
    ¥ & a_{2}r & ¥  & ¥ & ¥ & ¥ & z_{2}r & ¥ \\ 
    ¥ & ¥ & \ddots & ¥ & ¥ & \iddots & ¥ & ¥ \\ 
    ¥ & ¥ & ¥ & a_{n}r & z_{n}r & ¥ & ¥ & ¥ \\ 
    ¥ & ¥ & ¥ & z_{n}^{*}r & b_{n}r & ¥ & ¥ & ¥ \\ 
    ¥ & ¥ & \iddots & ¥ & ¥ & \ddots & ¥ & ¥ \\ 
    ¥ & z_{2}^{*} r& ¥ & ¥ & ¥ & ¥ & b_{2}r & ¥ \\ 
    z_{1}^{*} t& ¥ & ¥ & ¥ & ¥ & ¥ & ¥ & b_{1}t \\ 
  \end{array}
\right),
\end{align}
where
\begin{align}\nn
 t=\frac{w_{1}}{w_{1}+\sqrt{a_{1}b_{1}}},~~\textup{and } ~~~r=1-t.
  \end{align}
  
   Note that since $w_{1}\le3\sqrt{a_{1}b_{1}}$, then $\frac{3}{4}\ge t>r$. One can show that 
\begin{align}
t\sqrt{a_{1}b_{1}}=r w_{1},
\end{align} 
which guarantees that $\hat{K}_{1}(t,r)$ falls in the category of case (a). Since 
\begin{align}\nn
&t|z_{1}|\le r w_{1},\\
&r|z_{j}|\le (t-r)\sqrt{a_{1}b_{1}}+r w_{j},
\end{align}
$\hat{K}_{1}(t,r)$ is biseparable. Regarding matrix $\hat{K}_{1}(r,t)$, since 
\begin{align}\nn
&r|z_{1}|\le t w_{1},\\
&t|z_{j}|\le r \sqrt{a_{1}b_{1}}+ t \sum_{i\neq 1,j}^{n} \sqrt{a_{i}b_{i}},
\end{align}
it does not  fall in the category of case (a) and thus belongs to either case (b) or case (c). If it falls in the category of case (b) then we can conclude that it is biseparable. If not, we divide $\hat{K}_{1}(t,r)$ into two matrices $\hat{K}_{1}(t,r)=\hat{K}_{2}(t^{\prime},r^{\prime})+\hat{K}_{2}(r^{\prime},t^{\prime})$, as before. Each time we divide a matrix in this way the trace of the remaining part is strictly smaller than the trace of the step before: $\textup{Tr}[\hat{K}_{i}(r^{i},t^{i})]\le 0.75^{i}$. Thus, we can write the matrix $\hat{X}$ as a convex sum of biseparable states and a remaining part that can be made arbitrarily close to zero. Therefore matrix $\hat{X}$ is a biseparable matrix. This completes the proof for all X-matrices. Therefore, we have proved that the GM concurrence of a $N$-qubit X-matrix is 
\begin{align}
C_{GM}(\hat{X})=2\max\{0,|z_{i}|-w_{i}\},~~1\le i\le n
\end{align}

\bibliographystyle{apsrev4-1}
\bibliography{mybib}

\end{document}